\documentclass[twocolumn,superscriptaddress,amsfont,amssymb,amsmath, showpacs,balancelastpage]{revtex4-1}
\usepackage{CJKutf8}

\usepackage[utf8]{inputenc}
\usepackage{amsmath,amssymb,amsfonts,stmaryrd}
\usepackage{physics}
\usepackage{dsfont}
\usepackage{graphicx}
\usepackage{xcolor}
\usepackage{graphicx}
\usepackage{cancel}
\usepackage{natbib}
\usepackage{color}
\usepackage{tikz}
\usepackage{mathrsfs}
\usepackage{relsize}
\usepackage{multirow}
\usepackage[linktocpage]{hyperref}
\usepackage[all]{xy}
\hypersetup{colorlinks=true,citecolor=blue,linkcolor=blue, urlcolor=blue, breaklinks=true}

\usepackage{float}

\usepackage{bm} 
\usepackage{subfigure} 
\usepackage{environ}
\usepackage{url}
\usepackage[margin=0.75in]{geometry}
\usepackage{ulem}

\usepackage{mathtools}

\newcommand{\Z}{{\mathbb Z}}

\NewEnviron{eqs}{%
\begin{equation}\begin{split}
    \BODY
\end{split}\end{equation}}

\definecolor{dkgreen}{rgb}{0,0.5,0}

\usepackage{amsthm}
\newtheorem{theorem}{Theorem}

\begin{document}

\begin{CJK*}{UTF8}{bsmi}

\title{Exact $\mathbb{Z}_2$ electromagnetic duality of $\mathbb{Z}_2$ toric code is non-Clifford}

\author{Ryohei Kobayashi}
\email[E-mail: ]{kobayashir061@gmail.com}
\affiliation{School of Natural Sciences, Institute for Advanced Study, Princeton, NJ 08540, USA}
\affiliation{Department of Applied Physics, The University of Tokyo, Tokyo 113-8656, Japan}

\date{\today}
\begin{abstract}
The 2D $\mathbb{Z}_2$ toric code admits a global symmetry exchanging electric and magnetic quasiparticles, known as electromagnetic duality. Known realizations include lattice translation symmetry, an exact $\mathbb{Z}_4$ symmetry generated by a Clifford circuit, and an exact $\mathbb{Z}_2$ symmetry generated by a non-Clifford circuit. We show that a Clifford electromagnetic duality cannot realize an exact internal $\mathbb{Z}_2$ symmetry. This is proved rigorously for symmetries with coarse translation invariance by $l$ lattice units for generic odd $l$. Therefore an exact internal $\mathbb{Z}_2$ electromagnetic duality must be non-Clifford, whereas generic internal Clifford realization necessarily has $\mathbb{Z}_{2^m}$ algebra with $m\ge 2$. Our result suggests an unexpected connection between the algebra of exact electromagnetic duality and Clifford hierarchy of circuits.
\end{abstract}

\maketitle

\end{CJK*}




\textit{Introduction.-- }
Global symmetry plays a central role in quantum many-body systems. It constrains dynamics, and organizes phases of matter. In topologically ordered systems, symmetry acts nontrivially on anyons and generates logical operators of error-correcting codes. Among solvable models of topological order, the $\mathbb{Z}_2$ toric code occupies a distinguished place \cite{Kitaev2003toriccode}: it is the canonical lattice realization of $\mathbb{Z}_2$ gauge theory, and its global symmetries provide a basic platform for both quantum error correction and fault-tolerant logical operations \cite{Yoshida_gate_SPT_2015, Yoshida:2015cia, Yoshida2017387, Barkeshli:2022wuz, Barkeshli:2022edm, Barkeshli:2023bta}.

At the level of topological order, (2+1)D $\mathbb{Z}_2$ gauge theory admits an electromagnetic duality exchanging the electric and magnetic quasiparticles. In the square-lattice toric code this duality exchanges star and plaquette operators. A standard microscopic realization combines a transversal Hadamard transformation with a lattice translation, and therefore is not an exact $\Z_2$ symmetry. More recently, an exact internal Clifford symmetry exchanging $e$ and $m$ was constructed as a finite-depth Clifford circuit generating a $\mathbb{Z}_4$ action rather than a $\mathbb{Z}_2$ action \cite{Barkeshli:2022wuz}. On the other hand, exact non-Clifford $\Z_2$ symmetry for electromagnetic duality are known to exist on the microscopic Hilbert space \cite{shirley2025QCA, Tu_2026}. Thus the obstruction to exact $\Z_2$ electromagnetic duality is not absolute; rather, it is specific to Clifford operators.

In this Letter, we prove a no-go theorem for exact \textit{Clifford} $\Z_2$ internal symmetry for electromagnetic duality. More precisely, we show that there is no exact Clifford $\mathbb{Z}_2$ symmetry of the square-lattice toric code that exchanges $e$ and $m$ and is translation-invariant with respect to an enlarged $l\times l$ unit cell, with generic odd integer $l$. Equivalently, if an exact symmetry implementing electromagnetic duality is represented by an automorphism of the Pauli group that commutes with translations by $l\hat x$ and $l\hat y$ for odd $l$, then it cannot have order two.

While our proof is established for odd $l$, we also checked the first even cases, $l=2,4$, by computer and found no exact $\Z_2$ symmetry exchanging $e$ and $m$ with local action on Pauli operators. Although this does not amount to a proof for all even $l$, it provides further evidence that exact $\Z_2$ electromagnetic duality is absent within Clifford operators, and hints at an interesting connection between the algebra of exact electromagnetic duality and Clifford hierarchy of circuits.

Our proof is formulated in the polynomial formalism for translation-invariant Pauli stabilizer Hamiltonians~\cite{Haah2013polynomial}. In this language, a Clifford symmetry with $l\times l$ enlarged translation symmetry is represented by a symplectic automorphism over the blocked Laurent polynomial ring. We show that, for odd $l$, the conditions of symplecticity, order two, and electromagnetic duality are algebraically incompatible. As a consequence, any exact odd-supercell translation-invariant Clifford implementation of electromagnetic duality must have order at least four. This lower bound is saturated by the explicit $\mathbb{Z}_4$ symmetry obtained in Ref.~\cite{Barkeshli:2022wuz, shirley2025QCA} (see Fig.~\ref{fig:Z4}).

\begin{figure}[htb]
\centering
\includegraphics[width=0.35\textwidth]{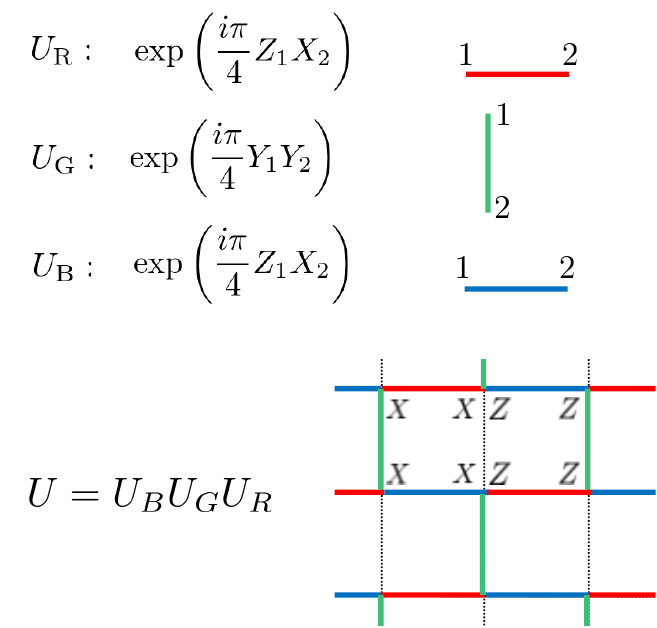}
\caption{A finite-depth circuit $U=U_B U_G U_R$ of the $\mathbb{Z}_2$ toric code generates an exact $\mathbb{Z}_4$ Clifford symmetry exchanging $e$ and $m$ \cite{Barkeshli:2022wuz}. On the checkerboard lattice, each plaquette supports a toric code stabilizer, as shown in the lower-right. Plaquettes of one color support $X$ stabilizers, while plaquettes of the other color support $Z$ stabilizers. Under conjugation by $U$, the $X$ and $Z$ stabilizers are exchanged. The $\mathbb{Z}_4$ algebra of this operator is discussed explicitly in Ref.~\cite{shirley2025QCA}.}
\label{fig:Z4}
\end{figure}

\textit{Toric code and polynomial formalism.--}
We begin by briefly reviewing the $\mathbb{Z}_2$ toric code on a square lattice and its description in the polynomial formalism for translation-invariant Pauli stabilizer Hamiltonians. We follow the notation of Ref.~\cite{Haah2013polynomial}. See e.g., \cite{Haah2021classification, Haah2011haahcode, Haah2022QCA, Tantivasadakarn2020JW, liang2025operatoralgebraalgorithmicconstruction, Chen2024fermionic, sun2026cliffordquantumcellularautomata, ruba2025wittgroupsbulkboundarycorrespondence} for the applications of polynomial formalisms in Pauli models.

The toric code is a commuting-projector Hamiltonian of qubits placed on the edges of a square lattice,
\begin{equation}
H_{\mathrm{TC}}
=
-\sum_{v} A_v-\sum_{p} B_p,
\label{eq:TC-H}
\end{equation}
where
\begin{equation}
A_v=\prod_{v\subset\partial e} X_e~,
\qquad
B_p=\prod_{e\subset \partial p} Z_e~.
\label{eq:TC-stabilizers}
\end{equation}

We then express the toric code utilizing the presentation of the translation-invariant Pauli Hamiltonian by Laurent polynomials. We consider one qubit at each edge of the square lattice. A lattice translation by one unit in the $\hat x$ or $\hat y$ direction is represented by multiplication by formal variables $x$ or $y$, respectively.

The translation group is thus encoded by the Laurent polynomial ring
\begin{equation}
R=\mathbb{F}_2[x^{\pm1},y^{\pm1}] .
\end{equation}

We represent generic Pauli operator with finite support by a column vector with entries in $R$. We first write single Pauli $X,Z$ operator by the following vectors:
\begin{align}
    X_{12} = \begin{pmatrix}
1\\
0 \\
0 \\
0
\end{pmatrix}, \
    X_{14} = \begin{pmatrix}
0\\
1 \\
0 \\
0
\end{pmatrix}, \
 Z_{12} = \begin{pmatrix}
0\\
0 \\
1 \\
0
\end{pmatrix}, \ 
    Z_{14} = \begin{pmatrix}
0\\
0 \\
0 \\
1
\end{pmatrix},
\end{align}
where $12,14$ are edges of the square lattice presented in Fig.~\ref{fig:lattice}.
We then express generic Pauli operators with finite support by a vector
\begin{equation}
v=
\begin{pmatrix}
v_X\\
v_Z
\end{pmatrix}
\in R^4,
\qquad
v_X,v_Z\in R^2,
\end{equation}
where the upper two entries encode the product of $X$ operators involved in a single operator, and the lower two entries encode $Z$ operators. The sums of monomials in entries of $v_X, v_Z$ indicate products of Pauli operators over translated cells, and multiplying the monomial $x^m y^n$ to $v$ translates the operator by a vector $(m,n)$.
For instance, the toric code Hamiltonians nearby the edges $12,14$ correspond to
\begin{align}
    A_1 = \begin{pmatrix}
1+x^{-1} \\
1+y^{-1} \\
0 \\
0 
\end{pmatrix}, \ 
    B_{1245} = \begin{pmatrix}
0 \\
0 \\
1+y \\
1+x 
\end{pmatrix}.
\end{align}

A basic ingredient in this formalism is the antipode map denoted by
\begin{equation}
\overline{x}=x^{-1},\qquad \overline{y}=y^{-1},
\end{equation}
extended linearly to all of $R$. Thus, for
\begin{equation}
f=\sum_{m,n} c_{m,n} x^m y^n \in R,
\label{eq:def of f}
\end{equation}
its antipode is
\begin{equation}
\bar f=\sum_{m,n} c_{m,n} x^{-m} y^{-n}.
\end{equation}
If $v$ is a column vector over $R$, we define
\begin{equation}
v^\dagger := \bar v^{\,T},
\end{equation}
namely transpose followed by the antipode map.

The commutation of Pauli operators is encoded by the symplectic form
\begin{equation}
\Lambda=
\begin{pmatrix}
0&I_2\\
I_2&0
\end{pmatrix}.
\end{equation}
Two Pauli operators represented by $u,v\in R^4$ commute iff
\begin{equation}
\langle u^\dagger \Lambda v\rangle_0 =0 ,
\label{eq:commute-rule}
\end{equation}
where $\langle f\rangle_0$ with $f$ given by \eqref{eq:def of f} is defined as $\langle f\rangle_0:= c_{0,0}$.
A translation-invariant stabilizer Hamiltonian with $t$ generator types is then specified by a matrix
\begin{equation}
\sigma \in \mathrm{Mat}_{4\times t}(R),
\end{equation}
whose columns represent the stabilizer generators in one reference unit cell. The $\Z_2$ toric code is represented by
\begin{equation}
\sigma_{\mathrm{TC}}
=
\begin{pmatrix}
1+\bar x & 0\\
1+\bar y & 0\\
0 & 1+y\\
0 & 1+x
\end{pmatrix}.
\label{eq:sigma-tc-short}
\end{equation}
The commuting-projector condition is simply
\begin{equation}
\sigma^\dagger \Lambda \sigma =0 .
\label{eq:stabilizer-commuting}
\end{equation}

This formulation is especially convenient for analyzing translation-invariant Clifford symmetries. Any such symmetry is represented by a symplectic automorphism
\begin{equation}
Q \in \mathrm{Sp}_{4}(R),
\qquad
Q^\dagger \Lambda Q = \Lambda,
\label{eq:symplectic-Q}
\end{equation}
acting on the Pauli module $P=R^4$. 
The exact $\Z_2$ electromagnetic duality of the toric code is generated by a unitary $U$ satisfying $U^2=1$, that exchanges $X$ and $Z$ stabilizer by
\begin{align}
    U A_v U^\dagger = B_{u(v)}~, UB_pU^\dagger = A_{\tilde u(p)}~,
\end{align}
where $u$ is the lattice translation by a vector $(m+1/2, n+1/2)$ with integers $m,n$ that maps a vertex $v$ to a plaquette $p=u(v)$. $\tilde u$ is the inverse translation by $-(m+1/2, n+1/2)$ to be compatible with the $\Z_2$ algebra $U^2=1$.

Such electromagnetic duality corresponds to $Q$ acting by interchanging the two columns of $\sigma_{\mathrm{TC}}$ up to a translation factor. In the following, we use the polynomial formalism with the enlarged $l\times l$ unit cell to prove the absence of an exact Clifford $\mathbb{Z}_2$ electromagnetic duality with the translation symmetry by $l$ unit lattice translations.

\begin{figure}[htb]
\centering
\includegraphics[width=0.15\textwidth]{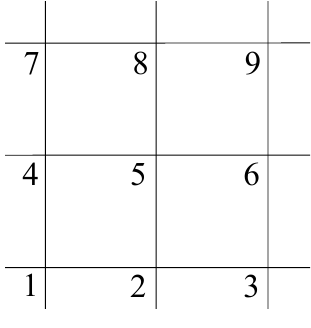}
\caption{A square lattice with a single qubit on each edge.
}
\label{fig:lattice}
\end{figure}

\textit{Blocked polynomial formalism.--}
To analyze Clifford symmetries that need only commute with translations by $l$ lattice spacings, we pass to an $l\times l$ enlarged unit cell. Let
\begin{equation}
R_l=\mathbb{F}_2[X^{\pm1},Y^{\pm1}],\qquad X=x^l,\quad Y=y^l,
\end{equation}
and regard
\begin{equation}
R_1=\mathbb{F}_2[x^{\pm1},y^{\pm1}]
\end{equation}
as a free $R_l$-module of rank $l^2$. Choosing the $R_l$-basis
\begin{equation}
\mathbf b=(x^i y^j)_{0\le i,j<l},
\end{equation}
multiplication by the fine translations $x$ and $y$ is represented by matrices
\begin{equation}
T_x,T_y\in \mathrm{Mat}_{l^2}(R_l)
\end{equation}
defined by
\begin{equation}
\mathbf b\, T_x = x\,\mathbf b,\qquad
\mathbf b\, T_y = y\,\mathbf b.
\end{equation}
These satisfy
\begin{equation}
T_x^\dagger=T_x^{-1},\qquad T_y^\dagger=T_y^{-1}.
\end{equation}

With $2l^2$ qubits in the enlarged unit cell, a Pauli operator is a vector of the vector space
\begin{equation}
P_l=R_l^{4l^2},
\end{equation}
equipped with symplectic form
\begin{equation}
\Lambda_l=
\begin{pmatrix}
0&I_{2l^2}\\
I_{2l^2}&0
\end{pmatrix}.
\end{equation}
The blocked toric-code stabilizer matrix takes the form
\begin{equation}
\sigma_{\mathrm{TC}}^{(l)}
=
\begin{pmatrix}
d & 0\\
0 & \tilde d
\end{pmatrix},
\end{equation}
where
\begin{equation}
d=
\begin{pmatrix}
1+T_x^\dagger\\
1+T_y^\dagger
\end{pmatrix},
\qquad
\tilde d=
\begin{pmatrix}
1+T_y\\
1+T_x
\end{pmatrix}.
\label{eq:dtilde-blocked}
\end{equation}
Here each displayed block is an $l^2\times l^2$ matrix over $R_l$.

\textit{Absence of exact Clifford $\mathbb{Z}_2$ electromagnetic duality for odd supercells.--}
We now state and prove the main result.

\begin{theorem}
Let $l$ be odd. There exists no Clifford $\mathbb{Z}_2$ symmetry of the square-lattice toric code Hamiltonian that
\begin{enumerate}
\item is invariant under translations by $l\hat x$ and $l\hat y$, and
\item implements electromagnetic duality by exchanging star and plaquette stabilizers.
\end{enumerate}
Equivalently, for odd $l$ there is no matrix
\begin{equation}
Q\in \mathrm{Mat}_{4l^2}(R_l)
\end{equation}
such that
\begin{align}
Q^\dagger \Lambda_l Q &= \Lambda_l, \label{eq:Q-symplectic-blocked}\\
Q^2 &= I, \label{eq:Q-involution-blocked}\\
Q \sigma_{\mathrm{TC}}^{(l)} &= \sigma_{\mathrm{TC}}^{(l)} S_U, \label{eq:Q-duality-blocked}
\end{align}
where
\begin{equation}
S_U=
\begin{pmatrix}
0 & U\\
U^{-1} & 0
\end{pmatrix}
\end{equation}
and $U\in \mathrm{Mat}_{l^2}(R_l)$ is the blocked representation of multiplication by a monomial $u=x^a y^b$.
\end{theorem}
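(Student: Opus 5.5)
The plan is to turn the three conditions into an incompatibility between a characteristic-$2$ quadratic invariant of $Q$ and the parity of $l$.

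\textbf{Reduction to an involution with a square-zero part.} Write $N:=Q+I$. Condition \eqref{eq:Q-involution-blocked} gives $N^2=0$. Conditions \eqref{eq:Q-symplectic-blocked} and \eqref{eq:Q-involution-blocked} together give $Q^\dagger\Lambda_l=\Lambda_l Q^{-1}=\Lambda_l Q$. Hence $M:=\Lambda_l N$ satisfies $M^\dagger=M$; since we are in characteristic $2$, $M$ is Hermitian with respect to the antipode. Condition \eqref{eq:Q-duality-blocked} becomes $N\sigma^{(l)}_{\mathrm{TC}}=\sigma^{(l)}_{\mathrm{TC}}(S_U+I)$. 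So $N$ sends a star to the sum of that star and its dual plaquette, and vice versa.

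\textbf{Specialization to a finite torus.} Next I would apply the ring homomorphism $R_l\to\mathbb F_2$, $X,Y\mapsto 1$. It commutes with the antipode, so all three conditions survive. They now describe a symplectic involution $\hat Q$ of $\mathbb F_2^{4l^2}$, the Pauli group of the toric code on an $l\times l$ torus. This $\hat Q$ maps each star $A_v$ to $B_{u(v)}$ and each plaquette $B_p$ to $A_{\tilde u(p)}$. On this torus $\hat M=\Lambda\hat N$ is a symmetric matrix over $\mathbb F_2$. Therefore the function $q(w)=w^T\hat M w=w^T\Lambda\hat Q w$ is linear: $q(w)=c\cdot w$, where $c$ is the diagonal of $\hat M$. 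Concretely, $q(w)$ is the commutator of $w$ with its image $\hat Q w$.

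\textbf{Evaluating $q$ in two ways.} On stabilizers, $q$ vanishes, because a star commutes with every plaquette. So $c$ lies in the symplectic complement of the stabilizer group; that is, $c$ is (after applying $\Lambda$) a logical operator or a stabilizer. On logical strings, the duality fixes $\hat Q\ell$ up to stabilizers. For example, the $e$-loop winding in $\hat x$ goes to an $m$-loop, so $q(\ell)$ is fixed by a crossing number. This number depends on the half-integer shift $(a+\tfrac12,b+\tfrac12)$ and on the loop length $l$.

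Separately, I would compute $q$ on the all-$Z$ (or all-$X$) operator, or on a sum of single-site operators over one supercell. Here I would use the fact that $c$ is the diagonal of a matrix that, before specialization, is $R_l$-linear, and organize it through the blocked basis $\mathbf b$ and the relations $T_x^l=X$, $T_y^l=Y$. The goal is to show that one parity count equals $l^2\cdot(\text{local quantity})\equiv 1$ for odd $l$, while the other evaluation forces it to be $0$. That would be the contradiction.

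\textbf{Main obstacle.} The hardest step is pinning down $c$. The condition $Q\sigma=\sigma S_U$ controls $\hat Q$ only on stabilizers and logical classes, not on individual qubits. So the diagonal of $\hat M$ must be extracted from global data. My first attempt would be to show that $c$ is itself invariant, namely $\hat Q^T c=c$, which follows from $N^2=0$ together with symmetry. Combined with the duality, this should force the class of $c$ to be a Hadamard-fixed logical such as $e_x m_x$. Its pairing with a suitable odd-length loop would then be $1$, and I would try to show this is inconsistent with linearity of $q$ on the decomposition of that loop into $e$- and $m$-pieces.

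If the torus specialization loses too much information, the fallback is to run the same argument with $X,Y$ set to other $l$-th roots of unity in $\overline{\mathbb F}_2$. These exist and are distinct for odd $l$, which is exactly where the hypothesis on $l$ would enter.
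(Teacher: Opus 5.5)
\textbf{Genuine gap: the specialization to a finite torus cannot produce a contradiction.} Your opening step matches the paper: symplecticity and $Q^2=I$ make $\Lambda_l Q$ Hermitian, which is exactly $C=C^\dagger$, $B=B^\dagger$, $D=A^\dagger$. The problem is what comes after. Setting $X,Y\mapsto 1$ loses the obstruction, because the specialized problem has solutions.

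Take $u=x^ay^b$ with $a=b=(l+1)/2$, so the star-to-plaquette shift is $(l/2,l/2)$ modulo $l$. On the $l\times l$ torus, apply transversal Hadamard followed by translation by $(l/2,l/2)$. This map sends $A_v\mapsto B_{v+(l/2,l/2)}$ and $B_p\mapsto A_{p+(l/2,l/2)}=A_{p-(l/2,l/2)}$. Its square is translation by $(l,l)$, i.e.\ the identity. So it is a symplectic involution satisfying every condition you retain. It maps $X_e\mapsto Z_{\tau(e)}$, where $\tau$ exchanges horizontal and vertical edges, so $\tau(e)\neq e$. Hence $q(X_e)=q(Z_e)=0$, $q\equiv 0$ and $c=0$. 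No evaluation on stabilizers, logical loops, or supercell sums can give a $1$.

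Separately, $Q\sigma=\sigma S_U$ prescribes $\hat Q$ only on the stabilizer space, not on logical classes. So the crossing-number value of $q(\ell)$ you want to use is not available.

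The root-of-unity fallback fails for the same reason. At any point $(\zeta,\eta)\neq(1,1)$ the specialized toric-code complex is exact, so the stabilizer fiber $\mathcal S$ is Lagrangian. The prescribed involution $\rho$ on $\mathcal S$ then extends to the symplectic involution $\rho\oplus\rho^{-\dagger}$ relative to a complementary Lagrangian. So for this $u$, every residue-field specialization is solvable.

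\textbf{What is missing.} The obstruction is genuinely infinite-volume, and you need to stay over the full ring $R_1$. The paper uses two facts that fail on every torus:
\begin{itemize}
\item The syzygies of $(1+x^{-1},1+y^{-1})$ are generated by $(1+y^{-1},1+x^{-1})$. So every solution of $Cd=\tilde dU^{-1}$ has the form $C_0+L\tilde d^\dagger$, with no homological (logical-operator) corrections.
\item $R_1$ is a domain, so $p(1+y^{-1})=(1+y)\bar p$ forces $p=y\bar p$.
\end{itemize}
The paper then compresses the Hermiticity conditions with $\phi(M)=\mathbf b M\mathbf b^\dagger$. The identity $\phi(I)=l^2\equiv 1$ is the only place where oddness of $l$ enters. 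This yields $(1+x)p+(1+y)q=u+xyu^{-1}$ with $p=y\bar p$ and $q=x\bar q$. The parity of the number of $E_{m,n}$-terms then rules this equation out. That parity is the value at $x=y=1$ of a ``half'' $g$ of $f=g+xy\bar g$, which is data the specialized problem no longer carries.
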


\begin{proof}
Write
\begin{equation}
Q=
\begin{pmatrix}
A & B\\
C & D
\end{pmatrix},
\qquad
A,B,C,D\in \mathrm{Mat}_{2l^2}(R_l),
\end{equation}
with $d,\tilde d$ defined in Eq.~\eqref{eq:dtilde-blocked}. Then Eq.~\eqref{eq:Q-duality-blocked} is equivalent to
\begin{align}
A d &= 0, \label{eq:Ad-blocked}\\
C d &= \tilde d\, U^{-1}, \label{eq:Cd-blocked}\\
B \tilde d &= d\, U, \label{eq:Bd-blocked}\\
D \tilde d &= 0. \label{eq:Dd-blocked}
\end{align}

Since $Q$ is symplectic, Eq.~\eqref{eq:Q-symplectic-blocked} implies
\begin{equation}
Q^{-1}=\Lambda_l Q^\dagger \Lambda_l.
\end{equation}
Combining this with the involution condition \eqref{eq:Q-involution-blocked}, we obtain
\begin{equation}
Q=\Lambda_l Q^\dagger \Lambda_l
=
\begin{pmatrix}
D^\dagger & B^\dagger\\
C^\dagger & A^\dagger
\end{pmatrix}.
\end{equation}
Hence
\begin{equation}
D=A^\dagger,\qquad B=B^\dagger,\qquad C=C^\dagger.
\label{eq:block-hermitian-blocked}
\end{equation}
In particular, any solution must contain a Hermitian matrix $C$ satisfying Eq.~\eqref{eq:Cd-blocked}. We now show that no such $C$ exists when $l$ is odd.

A particular solution of Eq.~\eqref{eq:Cd-blocked} is
\begin{equation}
C_0=
\begin{pmatrix}
0 & T_y U^{-1}\\
U^{-1} T_x & 0
\end{pmatrix},
\end{equation}
since all multiplication operators commute and
\begin{equation}
C_0 d=
\begin{pmatrix}
T_y U^{-1}(1+T_y^\dagger)\\
U^{-1}T_x(1+T_x^\dagger)
\end{pmatrix}
=
\begin{pmatrix}
(1+T_y)U^{-1}\\
(1+T_x)U^{-1}
\end{pmatrix}
=
\tilde d\, U^{-1}.
\end{equation}
Generic solution to Eq.~\eqref{eq:Cd-blocked} is obtained by shifting $C_0$ by a solution $M=(M_1,M_2)$ of 
the homogeneous equation $(M_1, M_2)d=0,$ i.e.,
\begin{equation}
M_1(1+T_x^\dagger)+M_2(1+T_y^\dagger)=0,
\end{equation}
which has general solution
\begin{equation}
(M_1,M_2)=L(1+T_y^\dagger,\ 1+T_x^\dagger)=L\tilde d^\dagger,
\label{eq:general solution M1,M2}
\end{equation}
using $L\in \text{Mat}_{2l^2\times l^2}(R_l)$.
This is understood by focusing on each $1\times l^2$ row vector $m_1,m_2$ of $M_1, M_2$ and evaluate the equation on $\mathbf{b}^\dagger$,
\begin{equation}
[m_1(1+T_x^\dagger)+m_2(1+T_y^\dagger)]\mathbf{b}^\dagger=0.
\end{equation}
Since $T_x^\dagger \mathbf{b}^\dagger=x^{-1}\mathbf{b}^\dagger$, $T_y^\dagger \mathbf{b}^\dagger=y^{-1}\mathbf{b}^\dagger$, we get
\begin{align}
(1+x^{-1}) m_1\mathbf{b}^\dagger + (1+y^{-1}) m_2\mathbf{b}^\dagger = 0. 
\end{align}
The general solution to the above equation is  given by $(m_1\mathbf{b}^\dagger, m_2\mathbf{b}^\dagger) = g(1+y^{-1}, 1+x^{-1})$ using a polynomial $g\in R_1$, hence general form of $(m_1,m_2)$ is
\begin{align}
    (m_1,m_2) = l(1+T_y^\dagger, 1+T_x^\dagger)
\end{align}
using a row vector $l$, which leads to Eq.~\eqref{eq:general solution M1,M2}. Therefore, general solution of Eq.~\eqref{eq:Cd-blocked} is
\begin{equation}
C=C_0+L\tilde d^\dagger,
\qquad
L=
\begin{pmatrix}
P\\
Q
\end{pmatrix},
\qquad
P,Q\in \mathrm{Mat}_{l^2}(R_l),
\end{equation}
namely
\begin{equation}
C=
\begin{pmatrix}
P(1+T_y^\dagger) & T_yU^{-1}+P(1+T_x^\dagger)\\[1mm]
U^{-1}T_x+Q(1+T_y^\dagger) & Q(1+T_x^\dagger)
\end{pmatrix}.
\label{eq:C-general-blocked}
\end{equation}

We now impose Hermiticity. From $C=C^\dagger$, the diagonal blocks of Eq.~\eqref{eq:C-general-blocked} give
\begin{align}
P(1+T_y^\dagger) &= (1+T_y)P^\dagger, \label{eq:Pdiag}\\
Q(1+T_x^\dagger) &= (1+T_x)Q^\dagger, \label{eq:Qdiag}
\end{align}
while the off-diagonal blocks give
\begin{equation}
T_yU^{-1}+P(1+T_x^\dagger)
=
T_x^\dagger U + (1+T_y)Q^\dagger.
\label{eq:offdiag-blocked}
\end{equation}

To reduce these matrix equations to scalar Laurent-polynomial equations, define
\begin{equation}
\phi:\mathrm{Mat}_{l^2}(R_l)\to R_1,
\qquad
\phi(M):=\mathbf b\, M\, \mathbf b^\dagger.
\end{equation}
This map satisfies
\begin{align*}
\phi(T_x M)&=x\,\phi(M), &
\phi(T_y M)&=y\,\phi(M), \\
\phi(MT_x^\dagger)&=x^{-1}\phi(M), &
\phi(MT_y^\dagger)&=y^{-1}\phi(M), \\
\phi(M^\dagger)&=\overline{\phi(M)}.
\end{align*}
Indeed, these identities follow immediately from the defining relations
\begin{equation}
\mathbf b\,T_x=x\,\mathbf b,\qquad \mathbf b\,T_y=y\,\mathbf b,
\end{equation}
together with
\begin{equation}
T_x^\dagger\mathbf b^\dagger=x^{-1}\mathbf b^\dagger,\qquad
T_y^\dagger\mathbf b^\dagger=y^{-1}\mathbf b^\dagger.
\end{equation}

Set
\begin{equation}
p:=\phi(P),\qquad q:=\phi(Q).
\end{equation}
Applying $\phi$ to Eqs.~\eqref{eq:Pdiag} and \eqref{eq:Qdiag} yields
\begin{equation}
p(1+y^{-1})=(1+y)\bar p,
\qquad
q(1+x^{-1})=(1+x)\bar q.
\end{equation}
Since $1+y^{-1}=y^{-1}(1+y)$ and $1+x^{-1}=x^{-1}(1+x)$, this is equivalent to
\begin{equation}
p=y\bar p,\qquad q=x\bar q.
\label{eq:pq-fixed-blocked}
\end{equation}

Next, applying $\phi$ to Eq.~\eqref{eq:offdiag-blocked} gives
\begin{equation}
\phi(T_yU^{-1}) + p(1+x^{-1})
=
\phi(T_x^\dagger U) + (1+y)\bar q.
\label{eq:offdiag-phi}
\end{equation}
Let
\begin{equation}
\nu:=\phi(I)=\mathbf b\,\mathbf b^\dagger=l^2 \pmod 2.
\end{equation}
Since $U$ is the blocked matrix representing multiplication by $u=x^a y^b$, one has
\begin{equation}
\phi(U)=u\,\mathbf b\mathbf b^\dagger=\nu\,u,
\qquad
\phi(U^{-1})=\nu\,u^{-1}.
\end{equation}
Hence Eq.~\eqref{eq:offdiag-phi} becomes
\begin{equation}
\nu\, y u^{-1}+p(1+x^{-1})
=
\nu\, x^{-1}u +(1+y)\bar q.
\end{equation}
Multiplying by $x$ and using $q=x\bar q$ from Eq.~\eqref{eq:pq-fixed-blocked}, we obtain
\begin{equation}
(1+x)p+(1+y)q=\nu\,(u+xy\,u^{-1}).
\label{eq:key-eqn-blocked}
\end{equation}

Now assume $l$ is odd. Then $\nu=l^2 \equiv 1 \pmod 2$, and Eq.~\eqref{eq:key-eqn-blocked} reduces to
\begin{equation}
(1+x)p+(1+y)q=u+xy\,u^{-1},
\end{equation}
with $p,q$ constrained by Eq.~\eqref{eq:pq-fixed-blocked}.  Define
\begin{align*}
P_{m,n} &:= x^m y^n + x^{-m} y^{1-n}, \\
Q_{m,n} &:= x^m y^n + x^{1-m} y^{-n}, \\
E_{m,n} &:= x^m y^n + x^{1-m} y^{1-n}.
\end{align*}
The relations $p=y\bar p$ and $q=x\bar q$ imply that $p$ and $q$ are finite $\mathbb{F}_2$-linear combinations of the $P_{m,n}$ and $Q_{m,n}$, respectively. Moreover,
\begin{align}
(1+x)P_{m,n} &= E_{m,n}+E_{m+1,n},\\
(1+y)Q_{m,n} &= E_{m,n}+E_{m,n+1}.
\end{align}
Thus every summand on the left-hand side of Eq.~\eqref{eq:key-eqn-blocked} contributes an even number of $E$-terms. Writing
\begin{equation}
\pi\!\left(\sum_{m,n}\alpha_{m,n}E_{m,n}\right)
:=
\sum_{m,n}\alpha_{m,n}\quad (\mathrm{mod}\ 2),
\end{equation}
we obtain
\begin{equation}
\pi\!\left((1+x)p+(1+y)q\right)=0.
\label{eq:LHS-parity-blocked}
\end{equation}
On the other hand, since $u=x^a y^b$ is a monomial,
\begin{equation}
u+xy\,u^{-1}
=
x^a y^b + x^{1-a}y^{1-b}
=
E_{a,b},
\end{equation}
and therefore
\begin{equation}
\pi\!\left(u+xy\,u^{-1}\right)=1.
\label{eq:RHS-parity-blocked}
\end{equation}
Equations \eqref{eq:LHS-parity-blocked} and \eqref{eq:RHS-parity-blocked} contradict Eq.~\eqref{eq:key-eqn-blocked}. Therefore no Hermitian matrix $C$ satisfying Eq.~\eqref{eq:Cd-blocked}, accordingly no matrix $Q$ obeying Eqs.~\eqref{eq:Q-symplectic-blocked}--\eqref{eq:Q-duality-blocked} exists.
\end{proof}

Theorem 1 shows that an exact Clifford implementation of electromagnetic duality in the square-lattice toric code cannot have order two whenever the symmetry is invariant under an odd $l\times l$ enlarged translation group. Therefore any such Clifford electromagnetic duality must have order at least four, consistent with the exact $\mathbb{Z}_4$ symmetry discussed in Refs.~\cite{Barkeshli:2022wuz, shirley2025QCA}. Concretely, one exact $\mathbb{Z}_4$ Clifford symmetry is represented by
\begin{align}
    Q_{\mathbb{Z}_4} =  \begin{pmatrix}
        0 & 0 & 1+x & 1+\bar x + y \\
        0 & 0 & x\bar y & 1+\bar y  \\
        1+y & \bar x y & 0 & 0 \\
        1+x+\bar y & 1+\bar x & 0 & 0
    \end{pmatrix},
\end{align}
which satisfies
\begin{align*}
Q_{\mathbb{Z}_4}^\dagger \Lambda Q_{\mathbb{Z}_4} &= \Lambda, \\
(Q_{\mathbb{Z}_4})^4 &= I, \\
Q_{\mathbb{Z}_4}\sigma_{\mathrm{TC}} &= \sigma_{\mathrm{TC}} S_{U=1}.
\end{align*}

While our theorem is proved for odd $l$, we have also carried out an explicit computer search for the first even supercells, $l=2,4$. In these cases we searched for exact order-two implementations of the electromagnetic duality whose induced action on Pauli operators is local. No solution to Eq.~\eqref{eq:Cd-blocked} was found for $l=2,4$ when the Laurent-polynomial ansatz for each entry of $C$ was truncated to sum of monomials $X^m Y^n$ with $|m|, |n| \le 5$, and $U=x^ay^b$ with $0\le a,b\le 1$. Although this does not yet constitute a proof for all even $l$, it provides strong evidence that exact internal $\mathbb{Z}_2$ electromagnetic duality is absent more generally within Clifford setting.

\textit{Faithful $\mathbb{Z}_4$ electromagnetic symmetry action in $\mathbb{Z}_2$ gauge theory.--}
We close with a brief comment on electromagnetic duality in $\mathbb{Z}_2$ gauge theory in generic dimensions. Our lattice result isolates an obstruction to exact $\Z_2$ symmetry  specific to Clifford implementations, while admitting exact Clifford $\Z_4$ symmetry for electromagnetic duality on the microscopic Hilbert space. The exact $\Z_2$ electromagnetic duality symmetry is realized by a non-Clifford circuit.

On the other hand, in continuum $\mathbb{Z}_2$ gauge theory on a generic closed spatial manifold, electromagnetic duality is generated by a condensation defect of fermionic Wilson surface operators \cite{Roumpedakis2023}, whose action on the Hilbert space is generically faithful $\mathbb{Z}_4$ rather than faithful $\mathbb{Z}_2$ \cite{Chen:2021xuc, Kobayashi2024crosscap}; in particular, its square can differ from the identity by the fermionic topological operator $W_\psi$, depending on the  topology of a spatial manifold (see Supplemental Materials for details). It would be very interesting to understand whether the $\mathbb{Z}_4$ algebra of the Clifford circuit can be derived directly by identifying the Clifford circuits as condensation defects on lattices.

\section*{Acknowledgments}
We thank Yu-An Chen and Po-Shen Hsin for helpful discussions and comments on the draft.
R.K. is supported by the U.S. Department of Energy, Office of Science, Office of High Energy Physics under Award Number DE-SC0009988 and by the Sivian Fund at the Institute for Advanced Study. 

\bibliography{bibliography.bib}

\begin{thebibliography}{23}%
\makeatletter
\providecommand \@ifxundefined [1]{%
 \@ifx{#1\undefined}
}%
\providecommand \@ifnum [1]{%
 \ifnum #1\expandafter \@firstoftwo
 \else \expandafter \@secondoftwo
 \fi
}%
\providecommand \@ifx [1]{%
 \ifx #1\expandafter \@firstoftwo
 \else \expandafter \@secondoftwo
 \fi
}%
\providecommand \natexlab [1]{#1}%
\providecommand \enquote  [1]{``#1''}%
\providecommand \bibnamefont  [1]{#1}%
\providecommand \bibfnamefont [1]{#1}%
\providecommand \citenamefont [1]{#1}%
\providecommand \href@noop [0]{\@secondoftwo}%
\providecommand \href [0]{\begingroup \@sanitize@url \@href}%
\providecommand \@href[1]{\@@startlink{#1}\@@href}%
\providecommand \@@href[1]{\endgroup#1\@@endlink}%
\providecommand \@sanitize@url [0]{\catcode `\\12\catcode `\$12\catcode `\&12\catcode `\#12\catcode `\^12\catcode `\_12\catcode `\%12\relax}%
\providecommand \@@startlink[1]{}%
\providecommand \@@endlink[0]{}%
\providecommand \url  [0]{\begingroup\@sanitize@url \@url }%
\providecommand \@url [1]{\endgroup\@href {#1}{\urlprefix }}%
\providecommand \urlprefix  [0]{URL }%
\providecommand \Eprint [0]{\href }%
\providecommand \doibase [0]{http://dx.doi.org/}%
\providecommand \selectlanguage [0]{\@gobble}%
\providecommand \bibinfo  [0]{\@secondoftwo}%
\providecommand \bibfield  [0]{\@secondoftwo}%
\providecommand \translation [1]{[#1]}%
\providecommand \BibitemOpen [0]{}%
\providecommand \bibitemStop [0]{}%
\providecommand \bibitemNoStop [0]{.\EOS\space}%
\providecommand \EOS [0]{\spacefactor3000\relax}%
\providecommand \BibitemShut  [1]{\csname bibitem#1\endcsname}%
\let\auto@bib@innerbib\@empty
\bibitem [{\citenamefont {Kitaev}(2003)}]{Kitaev2003toriccode}%
  \BibitemOpen
  \bibfield  {author} {\bibinfo {author} {\bibfnamefont {A.}~\bibnamefont {Kitaev}},\ }\href {\doibase 10.1016/s0003-4916(02)00018-0} {\bibfield  {journal} {\bibinfo  {journal} {Annals of Physics}\ }\textbf {\bibinfo {volume} {303}},\ \bibinfo {pages} {2–30} (\bibinfo {year} {2003})}\BibitemShut {NoStop}%
\bibitem [{\citenamefont {Yoshida}(2015)}]{Yoshida_gate_SPT_2015}%
  \BibitemOpen
  \bibfield  {author} {\bibinfo {author} {\bibfnamefont {B.}~\bibnamefont {Yoshida}},\ }\href {\doibase 10.1103/PhysRevB.91.245131} {\bibfield  {journal} {\bibinfo  {journal} {Phys. Rev. B}\ }\textbf {\bibinfo {volume} {91}},\ \bibinfo {pages} {245131} (\bibinfo {year} {2015})}\BibitemShut {NoStop}%
\bibitem [{\citenamefont {Yoshida}(2016)}]{Yoshida:2015cia}%
  \BibitemOpen
  \bibfield  {author} {\bibinfo {author} {\bibfnamefont {B.}~\bibnamefont {Yoshida}},\ }\href {\doibase 10.1103/PhysRevB.93.155131} {\bibfield  {journal} {\bibinfo  {journal} {Phys. Rev. B}\ }\textbf {\bibinfo {volume} {93}},\ \bibinfo {pages} {155131} (\bibinfo {year} {2016})},\ \Eprint {http://arxiv.org/abs/1508.03468} {arXiv:1508.03468 [cond-mat.str-el]} \BibitemShut {NoStop}%
\bibitem [{\citenamefont {Yoshida}(2017)}]{Yoshida2017387}%
  \BibitemOpen
  \bibfield  {author} {\bibinfo {author} {\bibfnamefont {B.}~\bibnamefont {Yoshida}},\ }\href {\doibase https://doi.org/10.1016/j.aop.2016.12.014} {\bibfield  {journal} {\bibinfo  {journal} {Annals of Physics}\ }\textbf {\bibinfo {volume} {377}},\ \bibinfo {pages} {387} (\bibinfo {year} {2017})}\BibitemShut {NoStop}%
\bibitem [{\citenamefont {Barkeshli}\ \emph {et~al.}(2023)\citenamefont {Barkeshli}, \citenamefont {Chen}, \citenamefont {Huang}, \citenamefont {Kobayashi}, \citenamefont {Tantivasadakarn},\ and\ \citenamefont {Zhu}}]{Barkeshli:2022wuz}%
  \BibitemOpen
  \bibfield  {author} {\bibinfo {author} {\bibfnamefont {M.}~\bibnamefont {Barkeshli}}, \bibinfo {author} {\bibfnamefont {Y.-A.}\ \bibnamefont {Chen}}, \bibinfo {author} {\bibfnamefont {S.-J.}\ \bibnamefont {Huang}}, \bibinfo {author} {\bibfnamefont {R.}~\bibnamefont {Kobayashi}}, \bibinfo {author} {\bibfnamefont {N.}~\bibnamefont {Tantivasadakarn}}, \ and\ \bibinfo {author} {\bibfnamefont {G.}~\bibnamefont {Zhu}},\ }\href {\doibase 10.21468/SciPostPhys.14.4.065} {\bibfield  {journal} {\bibinfo  {journal} {SciPost Phys.}\ }\textbf {\bibinfo {volume} {14}},\ \bibinfo {pages} {065} (\bibinfo {year} {2023})},\ \Eprint {http://arxiv.org/abs/2208.07367} {arXiv:2208.07367 [cond-mat.str-el]} \BibitemShut {NoStop}%
\bibitem [{\citenamefont {Barkeshli}\ \emph {et~al.}(2024{\natexlab{a}})\citenamefont {Barkeshli}, \citenamefont {Chen}, \citenamefont {Hsin},\ and\ \citenamefont {Kobayashi}}]{Barkeshli:2022edm}%
  \BibitemOpen
  \bibfield  {author} {\bibinfo {author} {\bibfnamefont {M.}~\bibnamefont {Barkeshli}}, \bibinfo {author} {\bibfnamefont {Y.-A.}\ \bibnamefont {Chen}}, \bibinfo {author} {\bibfnamefont {P.-S.}\ \bibnamefont {Hsin}}, \ and\ \bibinfo {author} {\bibfnamefont {R.}~\bibnamefont {Kobayashi}},\ }\href {\doibase 10.21468/SciPostPhys.16.4.089} {\bibfield  {journal} {\bibinfo  {journal} {SciPost Phys.}\ }\textbf {\bibinfo {volume} {16}},\ \bibinfo {pages} {089} (\bibinfo {year} {2024}{\natexlab{a}})},\ \Eprint {http://arxiv.org/abs/2211.11764} {arXiv:2211.11764 [cond-mat.str-el]} \BibitemShut {NoStop}%
\bibitem [{\citenamefont {Barkeshli}\ \emph {et~al.}(2024{\natexlab{b}})\citenamefont {Barkeshli}, \citenamefont {Hsin},\ and\ \citenamefont {Kobayashi}}]{Barkeshli:2023bta}%
  \BibitemOpen
  \bibfield  {author} {\bibinfo {author} {\bibfnamefont {M.}~\bibnamefont {Barkeshli}}, \bibinfo {author} {\bibfnamefont {P.-S.}\ \bibnamefont {Hsin}}, \ and\ \bibinfo {author} {\bibfnamefont {R.}~\bibnamefont {Kobayashi}},\ }\href {\doibase 10.21468/SciPostPhys.16.5.122} {\bibfield  {journal} {\bibinfo  {journal} {SciPost Phys.}\ }\textbf {\bibinfo {volume} {16}},\ \bibinfo {pages} {122} (\bibinfo {year} {2024}{\natexlab{b}})},\ \Eprint {http://arxiv.org/abs/2311.05674} {arXiv:2311.05674 [cond-mat.str-el]} \BibitemShut {NoStop}%
\bibitem [{\citenamefont {Shirley}\ \emph {et~al.}(2025)\citenamefont {Shirley}, \citenamefont {Zhang}, \citenamefont {Ji},\ and\ \citenamefont {Levin}}]{shirley2025QCA}%
  \BibitemOpen
  \bibfield  {author} {\bibinfo {author} {\bibfnamefont {W.}~\bibnamefont {Shirley}}, \bibinfo {author} {\bibfnamefont {C.}~\bibnamefont {Zhang}}, \bibinfo {author} {\bibfnamefont {W.}~\bibnamefont {Ji}}, \ and\ \bibinfo {author} {\bibfnamefont {M.}~\bibnamefont {Levin}},\ }\href {https://arxiv.org/abs/2507.21267} {\  (\bibinfo {year} {2025})},\ \Eprint {http://arxiv.org/abs/2507.21267} {arXiv:2507.21267 [cond-mat.str-el]} \BibitemShut {NoStop}%
\bibitem [{\citenamefont {Tu}\ \emph {et~al.}(2026)\citenamefont {Tu}, \citenamefont {Long},\ and\ \citenamefont {Else}}]{Tu_2026}%
  \BibitemOpen
  \bibfield  {author} {\bibinfo {author} {\bibfnamefont {Y.-T.}\ \bibnamefont {Tu}}, \bibinfo {author} {\bibfnamefont {D.~M.}\ \bibnamefont {Long}}, \ and\ \bibinfo {author} {\bibfnamefont {D.~V.}\ \bibnamefont {Else}},\ }\href {\doibase 10.1103/m188-w1ct} {\bibfield  {journal} {\bibinfo  {journal} {Physical Review X}\ }\textbf {\bibinfo {volume} {16}} (\bibinfo {year} {2026}),\ 10.1103/m188-w1ct}\BibitemShut {NoStop}%
\bibitem [{\citenamefont {Haah}(2013)}]{Haah2013polynomial}%
  \BibitemOpen
  \bibfield  {author} {\bibinfo {author} {\bibfnamefont {J.}~\bibnamefont {Haah}},\ }\href {\doibase 10.1007/s00220-013-1810-2} {\bibfield  {journal} {\bibinfo  {journal} {Communications in Mathematical Physics}\ }\textbf {\bibinfo {volume} {324}},\ \bibinfo {pages} {351–399} (\bibinfo {year} {2013})}\BibitemShut {NoStop}%
\bibitem [{\citenamefont {Haah}(2021)}]{Haah2021classification}%
  \BibitemOpen
  \bibfield  {author} {\bibinfo {author} {\bibfnamefont {J.}~\bibnamefont {Haah}},\ }\href {\doibase 10.1063/5.0021068} {\bibfield  {journal} {\bibinfo  {journal} {Journal of Mathematical Physics}\ }\textbf {\bibinfo {volume} {62}} (\bibinfo {year} {2021}),\ 10.1063/5.0021068}\BibitemShut {NoStop}%
\bibitem [{\citenamefont {Haah}(2011)}]{Haah2011haahcode}%
  \BibitemOpen
  \bibfield  {author} {\bibinfo {author} {\bibfnamefont {J.}~\bibnamefont {Haah}},\ }\href {\doibase 10.1103/physreva.83.042330} {\bibfield  {journal} {\bibinfo  {journal} {Physical Review A}\ }\textbf {\bibinfo {volume} {83}} (\bibinfo {year} {2011}),\ 10.1103/physreva.83.042330}\BibitemShut {NoStop}%
\bibitem [{\citenamefont {Haah}\ \emph {et~al.}(2022)\citenamefont {Haah}, \citenamefont {Fidkowski},\ and\ \citenamefont {Hastings}}]{Haah2022QCA}%
  \BibitemOpen
  \bibfield  {author} {\bibinfo {author} {\bibfnamefont {J.}~\bibnamefont {Haah}}, \bibinfo {author} {\bibfnamefont {L.}~\bibnamefont {Fidkowski}}, \ and\ \bibinfo {author} {\bibfnamefont {M.~B.}\ \bibnamefont {Hastings}},\ }\href {\doibase 10.1007/s00220-022-04528-1} {\bibfield  {journal} {\bibinfo  {journal} {Communications in Mathematical Physics}\ }\textbf {\bibinfo {volume} {398}},\ \bibinfo {pages} {469–540} (\bibinfo {year} {2022})}\BibitemShut {NoStop}%
\bibitem [{\citenamefont {Tantivasadakarn}(2020)}]{Tantivasadakarn2020JW}%
  \BibitemOpen
  \bibfield  {author} {\bibinfo {author} {\bibfnamefont {N.}~\bibnamefont {Tantivasadakarn}},\ }\href {\doibase 10.1103/physrevresearch.2.023353} {\bibfield  {journal} {\bibinfo  {journal} {Physical Review Research}\ }\textbf {\bibinfo {volume} {2}} (\bibinfo {year} {2020}),\ 10.1103/physrevresearch.2.023353}\BibitemShut {NoStop}%
\bibitem [{\citenamefont {Liang}\ \emph {et~al.}(2025)\citenamefont {Liang}, \citenamefont {Yang}, \citenamefont {Iosue},\ and\ \citenamefont {Chen}}]{liang2025operatoralgebraalgorithmicconstruction}%
  \BibitemOpen
  \bibfield  {author} {\bibinfo {author} {\bibfnamefont {Z.}~\bibnamefont {Liang}}, \bibinfo {author} {\bibfnamefont {B.}~\bibnamefont {Yang}}, \bibinfo {author} {\bibfnamefont {J.~T.}\ \bibnamefont {Iosue}}, \ and\ \bibinfo {author} {\bibfnamefont {Y.-A.}\ \bibnamefont {Chen}},\ }\href {https://arxiv.org/abs/2410.11942} {\enquote {\bibinfo {title} {Operator algebra and algorithmic construction of boundaries and defects in (2+1)d topological pauli stabilizer codes},}\ } (\bibinfo {year} {2025}),\ \Eprint {http://arxiv.org/abs/2410.11942} {arXiv:2410.11942 [quant-ph]} \BibitemShut {NoStop}%
\bibitem [{\citenamefont {Chen}\ \emph {et~al.}(2024)\citenamefont {Chen}, \citenamefont {Gorshkov},\ and\ \citenamefont {Xu}}]{Chen2024fermionic}%
  \BibitemOpen
  \bibfield  {author} {\bibinfo {author} {\bibfnamefont {Y.-A.}\ \bibnamefont {Chen}}, \bibinfo {author} {\bibfnamefont {A.~V.}\ \bibnamefont {Gorshkov}}, \ and\ \bibinfo {author} {\bibfnamefont {Y.}~\bibnamefont {Xu}},\ }\href {\doibase 10.21468/scipostphys.16.1.033} {\bibfield  {journal} {\bibinfo  {journal} {SciPost Physics}\ }\textbf {\bibinfo {volume} {16}} (\bibinfo {year} {2024}),\ 10.21468/scipostphys.16.1.033}\BibitemShut {NoStop}%
\bibitem [{\citenamefont {Sun}\ \emph {et~al.}(2026)\citenamefont {Sun}, \citenamefont {Yang}, \citenamefont {Wang}, \citenamefont {Tantivasadakarn},\ and\ \citenamefont {Chen}}]{sun2026cliffordquantumcellularautomata}%
  \BibitemOpen
  \bibfield  {author} {\bibinfo {author} {\bibfnamefont {M.}~\bibnamefont {Sun}}, \bibinfo {author} {\bibfnamefont {B.}~\bibnamefont {Yang}}, \bibinfo {author} {\bibfnamefont {Z.}~\bibnamefont {Wang}}, \bibinfo {author} {\bibfnamefont {N.}~\bibnamefont {Tantivasadakarn}}, \ and\ \bibinfo {author} {\bibfnamefont {Y.-A.}\ \bibnamefont {Chen}},\ }\href {https://arxiv.org/abs/2509.07099} {\enquote {\bibinfo {title} {Clifford quantum cellular automata from topological quantum field theories and invertible subalgebras},}\ } (\bibinfo {year} {2026}),\ \Eprint {http://arxiv.org/abs/2509.07099} {arXiv:2509.07099 [math.QA]} \BibitemShut {NoStop}%
\bibitem [{\citenamefont {Ruba}\ and\ \citenamefont {Yang}(2025)}]{ruba2025wittgroupsbulkboundarycorrespondence}%
  \BibitemOpen
  \bibfield  {author} {\bibinfo {author} {\bibfnamefont {B.}~\bibnamefont {Ruba}}\ and\ \bibinfo {author} {\bibfnamefont {B.}~\bibnamefont {Yang}},\ }\href {https://arxiv.org/abs/2509.10418} {\enquote {\bibinfo {title} {Witt groups and bulk-boundary correspondence for stabilizer states},}\ } (\bibinfo {year} {2025}),\ \Eprint {http://arxiv.org/abs/2509.10418} {arXiv:2509.10418 [math-ph]} \BibitemShut {NoStop}%
\bibitem [{\citenamefont {Roumpedakis}\ \emph {et~al.}(2023)\citenamefont {Roumpedakis}, \citenamefont {Seifnashri},\ and\ \citenamefont {Shao}}]{Roumpedakis2023}%
  \BibitemOpen
  \bibfield  {author} {\bibinfo {author} {\bibfnamefont {K.}~\bibnamefont {Roumpedakis}}, \bibinfo {author} {\bibfnamefont {S.}~\bibnamefont {Seifnashri}}, \ and\ \bibinfo {author} {\bibfnamefont {S.-H.}\ \bibnamefont {Shao}},\ }\href {\doibase 10.1007/s00220-023-04706-9} {\bibfield  {journal} {\bibinfo  {journal} {Communications in Mathematical Physics}\ }\textbf {\bibinfo {volume} {401}},\ \bibinfo {pages} {3043–3107} (\bibinfo {year} {2023})}\BibitemShut {NoStop}%
\bibitem [{\citenamefont {Chen}\ \emph {et~al.}(2021)\citenamefont {Chen}, \citenamefont {Dua}, \citenamefont {Hsin}, \citenamefont {Jian}, \citenamefont {Shirley},\ and\ \citenamefont {Xu}}]{Chen:2021xuc}%
  \BibitemOpen
  \bibfield  {author} {\bibinfo {author} {\bibfnamefont {X.}~\bibnamefont {Chen}}, \bibinfo {author} {\bibfnamefont {A.}~\bibnamefont {Dua}}, \bibinfo {author} {\bibfnamefont {P.-S.}\ \bibnamefont {Hsin}}, \bibinfo {author} {\bibfnamefont {C.-M.}\ \bibnamefont {Jian}}, \bibinfo {author} {\bibfnamefont {W.}~\bibnamefont {Shirley}}, \ and\ \bibinfo {author} {\bibfnamefont {C.}~\bibnamefont {Xu}},\ }\href@noop {} {\  (\bibinfo {year} {2021})},\ \Eprint {http://arxiv.org/abs/2112.02137} {arXiv:2112.02137 [cond-mat.str-el]} \BibitemShut {NoStop}%
\bibitem [{\citenamefont {Kobayashi}\ and\ \citenamefont {Zhu}(2024)}]{Kobayashi2024crosscap}%
  \BibitemOpen
  \bibfield  {author} {\bibinfo {author} {\bibfnamefont {R.}~\bibnamefont {Kobayashi}}\ and\ \bibinfo {author} {\bibfnamefont {G.}~\bibnamefont {Zhu}},\ }\href {\doibase 10.1103/prxquantum.5.020360} {\bibfield  {journal} {\bibinfo  {journal} {PRX Quantum}\ }\textbf {\bibinfo {volume} {5}} (\bibinfo {year} {2024}),\ 10.1103/prxquantum.5.020360}\BibitemShut {NoStop}%
\bibitem [{\citenamefont {Kaidi}\ \emph {et~al.}(2023)\citenamefont {Kaidi}, \citenamefont {Ohmori},\ and\ \citenamefont {Zheng}}]{kaidi2023symmetrytftsnoninvertibledefects}%
  \BibitemOpen
  \bibfield  {author} {\bibinfo {author} {\bibfnamefont {J.}~\bibnamefont {Kaidi}}, \bibinfo {author} {\bibfnamefont {K.}~\bibnamefont {Ohmori}}, \ and\ \bibinfo {author} {\bibfnamefont {Y.}~\bibnamefont {Zheng}},\ }\href {https://arxiv.org/abs/2209.11062} {\enquote {\bibinfo {title} {Symmetry tfts for non-invertible defects},}\ } (\bibinfo {year} {2023}),\ \Eprint {http://arxiv.org/abs/2209.11062} {arXiv:2209.11062 [hep-th]} \BibitemShut {NoStop}%
\bibitem [{\citenamefont {{Karlheinz Knapp}}()}]{ManifoldAtlasWu}%
  \BibitemOpen
  \bibfield  {author} {\bibinfo {author} {\bibnamefont {{Karlheinz Knapp}}},\ }\href {http://www.map.mpim-bonn.mpg.de/Wu_class} {\enquote {\bibinfo {title} {Wu class},}\ }\BibitemShut {NoStop}%
\end{thebibliography}%

\onecolumngrid

\vspace{0.3cm}

\newpage

\begin{center}
\Large{\bf Supplemental Materials}
\end{center}
\onecolumngrid

\section{Condensation defect of $\Z_2$ gauge theory and electromagnetic duality}

Here we recall that the electromagnetic duality of $\Z_2$ gauge theory in generic dimensions is generated by a condensation defect, and that the condensation defect generates faithful $\Z_4$ action on the Hilbert space.

Consider a $\Z_2$ gauge theory
\begin{equation}
S=\pi\int adb ,
\end{equation}
where $a$ and $b$ are $\mathbb{Z}_2$-valued $k$-form gauge fields.
For every closed $k$-cycle $\Sigma_k$, the theory has electric and magnetic topological operators
\begin{equation}
W_e(\Sigma_k)=(-1)^{\int_{\Sigma_k} a},
\qquad
W_m(\Sigma_k)=(-1)^{\int_{\Sigma_k} b},
\end{equation}
as well as the fermionic operator
\begin{equation}
W_\psi(\Sigma_k)=W_e(\Sigma_k)W_m(\Sigma_k),
\end{equation}
which generate $\Z_2$ $k$-form symmetry.
Electromagnetic duality exchanges $W_e$ and $W_m$.

A topological operator implementing the electromagnetic duality is given by the condensation defect of the fermionic operator (see Refs.~\cite{Roumpedakis2023}, \cite{kaidi2023symmetrytftsnoninvertibledefects} for discussions in (2+1)D and (4+1)D),
\begin{equation}
U_{\rm em}
=
\frac{1}{\sqrt{|H_k(M,\mathbb{Z}_2)|}}
\sum_{\Sigma\in H_k(M,\mathbb{Z}_2)}
W_\psi(\Sigma),
\end{equation}
with $M$ a spatial $2k$-manifold. This operator indeed permutes $e$ and $m$:
\begin{equation}
U_{\rm em} W_e(\Sigma_k)=W_m(\Sigma_k)U_{\rm em},\ 
U_{\rm em} W_m(\Sigma_k)=W_e(\Sigma_k)U_{\rm em}.
\end{equation}
The square of this duality operator is not generically trivial, but rather
\begin{equation}
U_{\rm em}^2 = W_\psi\!\bigl(\mathrm{PD}(\nu_k)\bigr),
\label{eq:Uem-square}
\end{equation}
where $\nu_k$ is the $k$th Wu class of the spatial manifold \cite{ManifoldAtlasWu} and $\mathrm{PD}$ denotes Poincar\'e duality. This is derived by
\begin{align}
    \begin{split}
        U_{\rm em}^2 &= \frac{1}{|H_k(M,\Z_2)|}\sum_{\Sigma,\Sigma'} W_{\psi}(\Sigma)W_{\psi}(\Sigma') \\
        &= \frac{1}{|H_k(M,\Z_2)|}\sum_{\Sigma,\Sigma'} W_{\psi}(\Sigma+\Sigma') (-1)^{\int\sigma\cup\sigma'} \\
        &= \frac{1}{|H_k(M,\Z_2)|}\sum_{\Sigma,\Sigma'} W_{\psi}(\Sigma) (-1)^{\int\sigma'\cup(\sigma+\sigma')} \\
        &= \frac{1}{|H_k(M,\Z_2)|}\sum_{\Sigma,\Sigma'} W_{\psi}(\Sigma) (-1)^{\int\sigma'\cup(\sigma+\nu_k)} \\
        &= W_{\psi}(\text{PD}(\nu_k))~, \\
    \end{split}
\end{align}
where we used $\int_{M} \sigma'\cup\sigma' = \int_M \nu_k\cup \sigma'$ mod 2.
It follows immediately that
\begin{equation}
U_{\rm em}^4=1.
\end{equation}
Thus electromagnetic duality acts faithfully by the $\mathbb{Z}_4$ action on a generic spatial manifold. In other words, the $\mathbb{Z}_2$ electromagnetic symmetry of $\mathbb{Z}_2$ gauge theory is extended by the $k$-form $\mathbb{Z}_2$ symmetry generated by $W_\psi$. For instance, the operator $U_{\rm em}$ generates the $\Z_4$ action in (2+1)D $\Z_2$ gauge theory when the spatial manifold is a 2d real projective plane $\mathbb{RP}^2$ \cite{Kobayashi2024crosscap}.
In generic dimensions, the above algebra $U_{\rm em}^2 = W_{\psi}$ signals a higher-group symmetry structure, as pointed out in Ref.~\cite{Chen:2021xuc} in (4+1)D.

\vfill

\end{document}